\documentclass{article}
\usepackage[margin=1in]{geometry}
\usepackage[title]{appendix}
\usepackage{amsthm, amsmath, amsfonts,mathtools, fullpage, graphicx, wrapfig, tikz, pgf, caption, subcaption, bbm}

\usepackage[english]{babel}
\usepackage{natbib}

\bibliographystyle{abbrvnat}
\setcitestyle{authoryear}

\newtheorem{assumption}{Assumption}

\theoremstyle{definition}
\newtheorem{definition}{Definition}

\newtheorem{lemma}{Lemma}
\newtheorem{proposition}{Proposition}
\newtheorem{condition}{Condition}
\newtheorem{remark}{Remark}

\newcommand\ind{\protect\mathpalette{\protect\independenT}{\perp}}
\def\independenT#1#2{\mathrel{\rlap{$#1#2$}\mkern2mu{#1#2}}}
\newcommand{\dsep}{\perp\!\!\!\!\perp_d} 
\newcommand{\E}{\mathbb{E}}

\newcommand{\1}{\mathbbm{1}}

\usetikzlibrary{arrows,shapes.arrows,shapes.geometric,
shapes.multipart,decorations.pathmorphing,positioning,
swigs}
\tikzstyle{condition} = [rectangle, text centered, draw=gray]
	\tikzstyle{from} = [<-, shorten <=1pt, >=stealth', semithick]
	\tikzstyle{into} = [->,shorten <=1pt, >=stealth', semithick]
	\tikzstyle{double} = [<->,shorten <=1pt, >=stealth', semithick]
	\tikzstyle{u} = [shape=circle, draw] 
	\tikzstyle{e} = [blue] 
	\tikzstyle{o} = [red]  
	\tikzstyle{i} = [xshift=5mm] 

\title{Using causal diagrams to assess parallel trends in difference-in-differences studies}

\author{Audrey Renson$^1$, Oliver Dukes$^2$ and 
Zach Shahn$^3$ \\
\small $^1$ Department of Population Health, New York University Grossman School of Medicine, New York, NY, USA\\
\small $^2$ Department of Mathematics, Computer Science and Statistics, Ghent University, Ghent, Belgium\\
\small $^3$ School of Public Health, The City University of New York, New York,  NY,  USA}

\usepackage{setspace}
\doublespacing

\begin{document}

\maketitle

\begin{abstract}
Difference-in-differences (DID) is popular because it can allow for unmeasured confounding when the key assumption of parallel trends holds. However, there exists relatively little guidance on how to decide a priori whether this assumption is reasonable. We attempt to develop such guidance by considering the relationship between a causal diagram and the parallel trends assumption. This is challenging because parallel trends is scale-dependent and causal diagrams are generally nonparametric (i.e., scale-independent). Here, we develop conditions under which, given a nonparametric causal diagram, one can reject or fail to reject parallel trends. In particular, we adopt a linear faithfulness assumption, which states that all graphically connected variables are correlated, and which is often reasonable in practice. Then, we show that parallel trends can be rejected if either (i) the treatment is affected by pre-treatment outcomes, or (ii) there exist unmeasured confounders for the effect of treatment on pre-treatment outcomes that are not confounders for the post-treatment outcome, or vice versa (or more precisely, the two outcomes possess distinct minimally sufficient sets). We also argue that parallel trends should be strongly questioned if (iii) the pre-treatment outcomes causally affect the post-treatment outcomes (though the two can be correlated) since there exist reasonable semiparametric models in which such an effect violates parallel trends. When (i-iii) are absent, a necessary and sufficient condition for parallel trends is that the association between the common set of confounders and the potential outcomes is constant on an additive scale, pre- and post-treatment. These conditions are similar to, but more general than, those previously derived in linear structural equations models. We discuss our approach in the context of the effect of Medicaid expansion under the U.S. Affordable Care Act on health insurance coverage rates.
\end{abstract}
\section{Introduction}
Difference-in-differences (DID) is a common identification approach, whose popularity likely comes from the understanding that it can allow for unmeasured confounding under certain assumptions. In particular, the key assumption of parallel trends requires that average differences in untreated potential outcomes over time are constant across levels of a binary treatment. Usually, the parallel trends assumption is assessed through examining whether trends are parallel in periods prior to the treatment \citep{callaway2023difference, cunningham2021causal}. Examination of these `pre-trends' can be useful, but has important limitations. Specifically, parallel pre-trends is neither necessary nor sufficient for parallel trends \citep{lechner2011estimation}, and assessing pre-trends requires access to historical data. While some common reasons for the failure of parallel trends have been discussed \cite[see e.g., ][]{ashenfelter1978estimating}, guidance is generally lacking regarding what features of a theoretical data-generating model (DGM) would lead parallel trends to hold or not.

Historically, in the literature, DID has been motivated based on a linear structural equation model \citep[e.g.,][]{angrist2009mostly}, and in this case the features of the DGM that imply parallel trends are more clear. Often a proposed model takes a form similar to 
\begin{align}\label{eq:twfe}
    Y_{it}^0&=\alpha_i + \lambda_t +\varepsilon_{it},
\end{align}
\citep{callaway2023difference}, where $Y_{it}^0$ denotes a potential outcome under no treatment, $\alpha_i$ are unit-specific ``fixed effects'' that can be conceptualized as time-invariant confounding factors \citep{sofer2016negative}, $\lambda_t$ captures common time trends ($t=0,1$), and $\varepsilon_{it}$ are independent errors. (See also \cite{wooldridge2021two} and \cite{angrist2009mostly} for similar expositions.) The fact that the magnitude of $\alpha_i$ (on an additive scale) does not depend on time is key in guaranteeing parallel trends under this model. However, it is not clear (a) how one would use a priori scientific information to justify the parametric form of model (\ref{eq:twfe}), nor is it clear (b) to what extent the strong conditions contained in this model are necessary. It would be desirable to be able to reason about parallel trends as presented in more recent expositions of DID that do not a  priori restrict the functional form of any models \citep{abadie2005semiparametric, lechner2011estimation, callaway2023difference}.

Causal graphs have proven useful for assessing identification assumptions in other causal inference approaches (e.g., instrumental variables, unconfoundedness) \citep{shrier2008reducing}. In particular, graphs can translate background scientific information in the form of the presence or absence of simple cause and effect relationships, which are often known, into the more obtuse statistical properties of the DGM needed for identification (such as conditional exchangeability/ignorability). This functionality would be valuable in DID applications, but is currently lacking. The challenge in applying graphs to DID is that parallel trends, unlike conditional exchangeability, represents a scale-dependent restriction on the association between potential outcomes and treatment \citep{lechner2011estimation}. For this reason, parallel trends cannot be rejected or accepted based on a nonparametric graph alone (except in trivial cases). Thus, it remains unclear, outside of special cases involving fully linear DGMs \citep{kim2021gain}, how one may translate simple cause and effect relationships to a statement about whether parallel trends should hold. Put another way, the unmeasured confounding that DID can adjust for has not been formally connected to the graphical concept of confounding, limiting the ability to draw on the extensive literature regarding the latter.

\par In this paper we explore the relationship between parallel trends and the information that can be encoded in a causal diagram, in order to develop graphical (and necessarily extra-graphical) conditions to guide adoption of the parallel trends assumption. We focus on the canonical two-group (treated, untreated), two-period (pre- and post-treatment) setting without measured covariates, and begin with a nonparametric structural equation model that allows common causes of any two or all three variables observed under this setup. We find that, while the parallel trends assumption is neither implied nor negated by a graph without additional assumptions, several useful necessary conditions on a graph can be derived if one adopts a (linear) faithfulness condition. Linear faithfulness states that any two variables which are not d-separated must be correlated, and is usually plausible in practice. Under linear faithfulness, parallel trends is negated by a graph encoding either (i) an effect of pre-treatment outcomes on treatments, or (ii) disjoint sets of unmeasured confounding variables for the treatment effect on pre-treatment and post-treatment outcomes (which we formalize using minimally sufficient sets). Additionally, under linear faithfulness and certain semiparametric restrictions on the DGM (though not in general), (iii) an effect of pre-treatment on post-treatment outcomes is only consistent with parallel trends if certain associations coincidentally cancel, and should thus be cause for concern. We also provide a necessary and sufficient condition for parallel trends (supposing our graph is not in conflict with parallel trends). This amounts to an additive homogeneous confounding assumption similar to the one encoded in model (\ref{eq:twfe}), but expressed in terms of an unmeasured covariate set that would identify the treatment effect on pre- and post-treatment outcomes (where the former is always assumed to be zero). 

The remainder of this paper is organized as follows: Section 2 places our analysis in the context of previous literature; Section 3 formalizes several key technical concepts, the assumed data-generating model, and assumptions; Section 4 presents results in the form of necessary and sufficient conditions for parallel trends; Section 5 applies our analysis to the effect of the US Affordable Care Act Medicaid expansions on insurance rates; and Section 6 concludes.
\section{Previous literature}
Our work builds directly on \cite{ghanem2022selection}, who established related necessary and sufficient conditions for parallel trends to hold under a nonparametric model for the potential outcomes using structural equations, but did not relate these to causal diagrams or graphical separation rules. Aided by causal diagrams, we introduce different but related criteria for reasoning about parallel trends, some (but not all) of which are direct implications of Ghanem et al.'s. A key difference between \cite{ghanem2022selection} and our work is our explicit framing around confounding, drawing connections between the DID literature and the broader causal inference literature based on unconfoundedness assumptions. On the other hand, \cite{ghanem2022selection} consider the role of measured covariates and provide sensitivity analyses, which we do not. 

Aside from \cite{ghanem2022selection}, nearly all literature to our knowledge that considers structural conditions for parallel trends does so in the context of linear structural models. As the one other exception, to our knowledge, \cite{weber2015assumption} begin with a nonparametric structural model representing a specific substantive example. These authors then develop a few different implications of parallel trends that include condition (i) in our results, and otherwise involve strong linearity assumptions for certain variables.

In terms of the remaining literature that studies parallel trends under linear structural models, \cite{kim2021gain} is perhaps most similar to our own work. \cite{kim2021gain} consider a few variations on a linear data-generating model and illustrate conditions on the linear coefficients required for parallel trends to hold. In addition to highlighting the more well-known ``constant bias'' condition (similar to our ``additive homogeneous confounding''), they show that if the pre-treatment outcome affects either the treatment or the post-treatment outcome, certain coefficients must cancel exactly in ways that might be hard to justify in practice, similar to conditions (i) and (iii) in our results. Our work builds upon \cite{kim2021gain} by showing that similar conditions are arrived at when one considers an unrestricted, nonparametric data-generating model. 

Additionally,  \cite{zeldow2021confounding} explore conditions under which covariates (observed or not) may cause bias in parametric DID settings where a two-way fixed effects model is appropriate. They arrive at a definition of confounding whose absence is similar to our ``additive homogeneous confounding'' assumption. Thus, our contribution relative to these earlier works assuming a particular parametric model is to illustrate that much of the same intuition holds when extended to a nonparametric model.

\section{Methods}\label{sec1}
\subsection{Difference-in-differences}
We focus on the canonical, two-period, two-group DID scenario, as follows. Consider an outcome variable $Y_{it}$ measured at times $t=0$ and $t=1$ on a sample of $n$ individuals ($i=1,...,n$). Some proportion of individuals are exposed to a binary treatment $A$ ($A=1$ treated, $A=0$ untreated) that occurs between the two outcome measurements, so that all individuals are untreated at time $t=0$. We use $Y_{it}^a$ to denote a potential outcome, or the value that $Y_{it}$ would take if, possibly countrary to fact, $A$ were set by intervention to $a$. Upper case is used to denote random variables, lower case for specific realizations, and script for support. The symbol $\ind$ is used to denote statistical independence. For any random variable $X$, we let $\dot X=X-\E[X]$ denote the variable centered about its mean.

Throughout, we will use the following definition of parallel trends:
\begin{definition} [Parallel trends]  \label{def:pt} $\E(Y_1^0-Y_0^0|A)=\E(Y_1^0-Y_0^0)$
\end{definition}
It is well-known that under parallel trends $\{$along with no anticipation ($Y_0^0=Y_0$) and causal consistency ($A=0\implies Y_1^0=Y_1$)$\}$, the average treatment effect in the treated is identified as $\E(Y_1^1-Y_1^0|A=1)=\E(Y_1-Y_0|A=1)-\E(Y_1-Y_0|A=0)$, and simple estimators based on sample moments are available. 
\subsection{Causal diagrams}
We draw heavily on causal diagrams, so here we provide semi-formal definitions of key concepts. For any random variable $V_j$, let $pa_j\equiv pa(V_j)$ denote the parents of $V_j$; i.e. variables that directly cause $V_j$. Let $f_j$ denote the structural function for $V_j$; i.e. the mechanism by which Nature assigns values of $V_j$ given its parents $pa_j.$ That is, $V_j=f_j(pa_j, \varepsilon_j),$ where $\varepsilon_j$ denotes an independent error term. A causal directed acyclic graph (DAG) is a collection of nodes $V_j$ $\{j=1,...,J\}$ representing random variables, with a directed edge from $V_i$ to $V_j$ if $V_i \in pa_j$. Two nodes are \textit{adjacent} if there is a directed edge between them, and a \textit{path} from $V_i$ to $V_j$ is a sequence of distinct nodes that are adjacent. A \textit{backdoor path} is a path between a treatment $A$ and an outcome that ends in a directed edge pointing into $A$. We use $\dsep$ to denote graphical \textit{d}-separation (for a definition of \textit{d}-separation, see \cite{pearl2009causality}). Unless otherwise stated, all causal diagrams in this paper correspond to nonparametric structural equation models (NPSEMs), such that the structural functions $\{f_j, j=1,...,J\}$ are undefined other than the specification of their parents. 
\par Throughout, we adopt the causal Markov assumption that any node is independent of its non-descendents given its parents, implying that if two variables $V_i$ and $V_j$ are \textit{d}-separated given a third $V_k$, then $V_i \ind V_j | V_k$ \citep{pearl2009causality}.
\par Potential outcomes can be connected to a NPSEM by depicting the latter as a directed single world intervention graph (SWIG) \citep{richardson2013single}. A directed SWIG is formed from a causal DAG by splitting treatment $A$ nodes into observed ($A$) and intervened ($a$) values, such that all edges entering $A$ are retained, all edges emanating from $A$ are deleted and shown emanating from $a$, and all variables $V$ that are descendents of $a$ are replaced with potential outcomes $V^a$. Directed SWIGs obey the causal Markov assumption, so that counterfactual independencies such as $Y^a \ind A|Z$ can be deduced from \textit{d}-separation. We also use partially directed SWIGs, in which directed edges represent known causal relationships as in a directed SWIG, and undirected edges represent open paths whose causal direction is unknown. That is, an undirected edge between two nodes $V_i$ and $V_j$ can represent either (i) unmeasured common causes of $V_i$ and $V_j$, (ii) direct causation of $V_i$ by $V_j$, or (iii) direct causation of $V_j$ by $V_i$; without specifying which. In other words, we use a partially directed SWIG as a shorthand to represent all possible directed SWIGs consistent with the unknown relationships depicted by the undirected edges.

\subsection{Linear faithfulness}
While the causal Markov assumption specifies that \textit{d}-separated variables must be independent, it does not require that \textit{d}-connected variables be associated. In what follows, some of our results will depend on the latter requirement, which we believe to be reasonable in most situations.
\begin{assumption}[Linear faithfulness]\label{asn:f} For any two vertices $V_i$ and $V_j$ in graph $G$, for any subset $B$ of the vertices in $G$, $Cov(V_i, V_j | B)=0 \; a.s.$ implies $V_i \dsep V_j | B$.
\end{assumption}
In words, linear faithfulness states that any variables with zero (conditional) covariance must also be (conditionally) \textit{d-}separated; i.e., any \textit{d-}connected variables cannot have zero covariance. Assumption \ref{asn:f} is called ``linear'' faithfulness, not because it restricts non-linear relationships between variables (it does not), but rather because it requires that \textit{d}-connected variables have an association that is non-null when measured on a linear scale (i.e., via covariance). It is more common to express faithfulness in terms of independence rather than zero covariance, but covariance is what matters for parallel trends, and one might reasonably expect non-zero covariance whenever independence is violated. Indeed, linear faithfulness is often adopted in the causal discovery literature  (see \cite{spirtes2001causation}, pg. 47), where standard algorthms for structure learning rely on (partial) correlations between variables. Violations of Assumption \ref{asn:f} occur when the paths between $V_i$ to $V_j$ cancel perfectly (possibly due to equilibrium forces or external control). In an example given by \cite{steel2006homogeneity}, a city may improve roads, leading to faster driving and increased fatalities, while simultaneously hiring more police officers to enforce speed limits with the goal of maintaining fatalities at existing levels; as a result, there may be no marginal covariance between road quality and fatalities, even though these road quality and fatalities would be d-connected in a graph. However, in the absence of such external constraints, linear faithfulness is usually thought to be reasonable. For additional examples and discussion, see \cite{andersen2013expect}. 

\subsection{Sufficient sets and  minimally sufficient sets}
For treatment $A$ and outcome $Y_t$ ($t=0,1$), we refer to a set of variables (measured or unmeasured) $M$ as a sufficient adjustment set (or simply, sufficient set) for the effect of $A$ on $Y_t$ if $\E(Y_t^0|A, M)=\E(Y_t^0|M)$ $a.s$. Since we are only considering one treatment $A$, we sometimes abbreviate this statement to say that $M$ is a sufficient adjustment set for $Y$. Also, since we are only concerned with identifying the ATT, we only consider untreated potential outcomes $Y_t^0$; in other applications, one may wish to consider $Y_t^a$ for all $a$. We say that a set of variables $M$ is a \textit{minimally} sufficient set if $M$ is a sufficient set and no proper subset of $M$ is sufficient \citep{greenland1999causal}. One can deduce whether $M$ is a sufficient set by checking if $Y_t^0$ is d-separated from $A$ given $M$ in a SWIG under the intervention $a=0$ \citep{richardson2013single}. 

\subsection{Data-generating model}
DID is usually motivated by the existence of unmeasured confounding. Thus, in addition to the observed variables $A$, $Y_0$, and $Y_1$, we will consider unmeasured common causes of any two or all three variables: let $U_1$ denote all variables that directly cause all three observed variables, let $U_2$ denote all variables that directly cause $Y_0$ and $Y_1$ but not $A$; let $U_3$ denote all variables that directly cause $A$ and $Y_0$ but not $Y_1$; and let $U_4$ denote all variables that directly cause $A$ and $Y_1$ but not $Y_0$. We assume that $U_1$, $U_2$, and $U_3$ can impact $U_4$ but leave the relationships between unmeasured variables otherwise unspecified. These relationships are depicted as a partially directed SWIG and corresponding NPSEM in Figure \ref{fig:npsem}. Thus, the model is essentially unrestricted given the time ordering $\left \{ \{U_1, U_2, U_3\} \rightarrow Y_0 \rightarrow U_4 \rightarrow A \rightarrow Y_1 \right\}$ (i.e, the time ordering between $U_1, U_2$, and $U_3$ is left unspecified). The only restrictions on the model are intended to allow for unmeasured variables to have different relationships with each of the measured ones. Specifically, $U_2$ does not directly affect $A$, $U_3$ does not directly affect $Y_1$, and $U_4$ does not directly affect $Y_0$; such effects would make each of these unmeasured variables equivalent to $U_1$ in the graph and thus unnecessary to define.
\begin{figure}
    \centering
    \begin{minipage}[c]{4cm}
      \begin{tikzpicture}[node distance=2cm]
\tikzset{line width=0.7pt, outer sep=0pt,
         ell/.style={draw,fill=white, inner sep=2pt,
          line width=0.7pt}}
        \node (y0) [ell, shape=ellipse] {$Y_0$};
        \node (a) [ell, shape=swig vsplit, right of=y0]  {\nodepart{left}{$A$}
        \nodepart{right}{a=0}};
        \node (y1) [ell, shape=ellipse, right of=a] {$Y_1^0$};
        \node (u1) [ell, dashed, shape=ellipse, below of=a] {$U_1$};
        \node (u2) [ell, dashed, shape=ellipse, above of=a] {$U_2$} ;
        \node (u3) [ell, dashed, shape=ellipse, below of=y0] {$U_3$};
        \node (u4) [ell, dashed, shape=ellipse, below of=y1] {$U_4$};
        \draw[->] (y0) to [bend left=23] (y1);
        \draw[->] (y0) to (a);
        \draw[->] (a) to (y1);
        \draw[->] (u1) to[bend left=10] (y0);
        \draw[->] (u1) to[out=110, in=-155] (a);
        \draw[->] (u1) to[bend right=10] (y1);
        \draw[->] (u2) to[bend right =10] (y0);
        \draw[->] (u2) to (y1);
        \draw[->] (u3) to (y0);
        \draw[->] (u3) to[in=-165] (a);  
        \draw[->] (u4) to (y1);
        \draw[->] (u4) to[out=140, in=-130] (a);
        \draw[->] (u1) to (u4);
        \draw[->] (y0) to[bend right=10] (u4);
        \draw[dashed, -] (u1) to [bend left=45] (u2);
        \draw[dashed,-] (u1) to (u3);
        \draw[dashed,-] (u1) to (u4);
        \draw[dashed,-] (u2) to[bend right=75] (u3);
        \draw[->] (u2) to[bend left=75] (u4);
        \draw[->] (u3) to [bend right=23] (u4);
        \end{tikzpicture}
   \end{minipage}%
   \hfill
   \begin{minipage}[c]{\textwidth-7cm}
      \begin{align*}
\{U_1, U_2, U_3\} &\sim P(U_1, U_2, U_3) \\
Y_0 &= f_{Y_0}(U_1, U_2, U_3, \varepsilon_{Y_0})\\
U_4 &= f_{U_4}(U_1, U_2, U_3, Y_0, \varepsilon_{U_4})\\
A &= f_A(U_1, U_3, U_4, Y_0, \varepsilon_A)\\
    Y_1 &= f_{Y_1}(U_1, U_2, Y_0, U_4, A, \varepsilon_{Y_1})
\end{align*}
   \end{minipage}
    \caption{Partially directed SWIG (left) and corresponding structural equation model (right) assumed to generate data for a cannonical 2x2 DID. Undirected edges (depicted as dashed lines for visual clarity) are used to represent edges whose direction is unknown. (https://dagitty.net/dags.html?id=jqVhYree)}
    \label{fig:npsem}
\end{figure}

\subsection{Motivating example}
To ground ideas, consider the following example. Under the Affordable Care Act (ACA), US states were given the option to expand Medicaid coverage using heavy federal subsidies starting in 2014. Some states took the opportunity to expand coverage right away, while others delayed. Let $i\in\{1,...,50\}$ index the US states, let $A_i$ denote an indicator of state $i$ having expanded Medicaid in 2014, and let $Y_{i0}$ and $Y_{i1}$ denote the uninsurance rate in state $i$ during 2013 and 2014, respectively. Suppose we wish to estimate $\E(Y_1^1-Y_1^0|A=1)$, the average immediate effect of Medicaid expansion in 2014 on health uninsurance rates in 2014 among people living below 138\% of the federal poverty line in expanding states. Parallel trends (Definition \ref{def:pt}) here states that in absence of Medicaid expansion, average differences in insurance rates between 2013 and 2014 would have been equal in expanding and non-expanding states.

\subsection{Prior results on the relationship between parallel trends and confounding}
In this paper, we define confounding at time $t$ as the covariance between the treatment and time $t$ potential outcomes, $Cov(A, Y_t^0)$, noting that this quantity equals zero if marginal exchangeability ($Y_t^0 \ind A)$ holds. An alternative (but not equivalent) way of defining confounding is the difference between the marginal association of $A$ and $Y_t$, $\E(Y_t|A=1)-\E(Y_t|A=0)$, and the true ATT at time $t$, $\E(Y^1_t-Y^0_t|A=1)$. This difference can be shown to equal $\E(Y^0_t|A=1)-\E(Y^0_t|A=0)$. A well known result in the literature is that parallel trends implies that the time 1 and time 0 confounding are equal in magnitude \citep{lechner2011estimation,sofer2016negative,zhang2021exploiting}; this applies whichever definition of confounding we use; i.e.,
\begin{align}
\nonumber    \E(Y_1^0-Y_0^0|A=1) &= \E(Y_1^0-Y_0^0|A=0)  \\
 \implies \E(Y_1^0|A=1)-\E(Y_1^0|A=0) &= \E(Y_0^0|A=1)-\E(Y_0^0|A=0)
 \label{eq:ec1}
    \end{align}
where (\ref{eq:ec1}) corresponds to the second definition, but also
\begin{align}
\nonumber    \E(Y_1^0-Y_0^0|A=1) &= \E(Y_1^0-Y_0^0) \\
\nonumber   \implies \E(Y_1^0A)/\E(A)-\E(Y_1^0) &= \E(Y_0^0A)/\E(A)-\E(Y_0^0) \\
 \implies Cov(Y_1^0,A) &= Cov(Y_0^0,A) \label{eq:ec2}
\end{align}
where (\ref{eq:ec2}) corresponds to the first definition. 

Importantly for our work, \cite{ghanem2022selection} recently showed that a more granular statement can be made about the relationship between potential outcomes and treatment under parallel trends. We draw heavily on this result and so restate it here for clarity. 
\begin{lemma}[Ghanem et al., 2024, Lemma F.3]\label{lemma:ghanemE3} Under mild regularity conditions (described in the original Lemma), parallel trends implies 
\begin{equation}\label{eq:ghanemE3}
\E(Y_1^0  -Y_0^0|pa(A))=\E(Y_1^0-Y_0^0)\quad a.s.    
\end{equation}
\end{lemma}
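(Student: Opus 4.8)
The plan is to extract a conditional independence for free from the structural model, use it to collapse parallel trends to a single scalar restriction, and then lean on the regularity conditions to turn that restriction into the claimed almost-sure equality. First I would record that $(Y_0^0, Y_1^0) \ind A \mid pa(A)$. This is immediate from the NPSEM of Figure~\ref{fig:npsem}: neither $Y_0^0 = Y_0$ nor $Y_1^0 = f_{Y_1}(U_1, U_2, Y_0, U_4, 0, \varepsilon_{Y_1})$ involves the error $\varepsilon_A$, while given $pa(A)$ the treatment $A = f_A(pa(A), \varepsilon_A)$ is a function of $\varepsilon_A$ alone, and $\varepsilon_A$ is independent of every other error; equivalently, in the SWIG under $a = 0$ the node $A$ retains only its incoming edges, so conditioning on $pa(A)$ d-separates $A$ from each $Y_t^0$, and the causal Markov assumption supplies the independence. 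This is the same observation, noted in Section~\ref{sec1}, that $pa(A)$ is a sufficient adjustment set for each $Y_t$, and it uses no faithfulness assumption.

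Next I would reduce parallel trends. Writing $D = Y_1^0 - Y_0^0$ and $g(pa(A)) = \E(D \mid pa(A))$, the conditional independence gives $\E(D \mid pa(A), A) = g(pa(A))$, so iterating expectations over $pa(A)$ within levels of $A$ yields $\E(D \mid A) = \E[g(pa(A)) \mid A]$. Hence parallel trends, $\E(D \mid A) = \E(D)$, is equivalent to
\begin{equation*}
\E\big[\,g(pa(A)) \mid A\,\big] = \E\big[\,g(pa(A))\,\big] \qquad a.s.
\end{equation*}
With $A$ binary this says precisely that $g(pa(A))$ is uncorrelated with the propensity score $\pi(pa(A)) = P(A = 1 \mid pa(A))$, i.e.\ $Cov\big(g(pa(A)), \pi(pa(A))\big) = 0$.

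The remaining step --- upgrading ``$g(pa(A))$ is mean-independent of $A$'' to ``$g(pa(A)) = \E(D)$ almost surely'' --- is the one that genuinely needs the regularity conditions, and is where I expect the only real work to lie. Mean-independence is strictly weaker than constancy: if, for instance, $A$ happens to be independent of $pa(A)$, the average untreated trend is balanced across treatment arms no matter how $\E(D \mid pa(A))$ varies, so the implication fails without an extra hypothesis. I would invoke a condition of the kind used by Ghanem et al.\ that rules out such coincidental balancing --- for example, requiring parallel trends to hold within every subpopulation obtained by conditioning on a measurable function of $pa(A)$, equivalently that it be robust to the functional form of the selection rule $f_A$. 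Under such a condition one may consider, for each threshold $c$, the selection rule $A_c = \1\{g(pa(A)) > c\}$; mean-independence of $g(pa(A))$ from $A_c$ for every $c$ then forces $g(pa(A))$ to be almost surely constant (a random variable whose mean coincides with its conditional mean above every threshold is a.s.\ constant), with the constant identified as $\E(D)$ by a final expectation. The first two steps are routine; the delicate point is executing this last step under the precise hypotheses of the original lemma and at full generality, e.g.\ with continuously distributed $pa(A)$ and only integrability of $D$.
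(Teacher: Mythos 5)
Your proof is correct and takes essentially the same route as the argument the paper displays (in Appendix~\ref{app:common_dep}, for the generalization in Lemma~\ref{lemma:common_dep}): conditional ignorability given $pa(A)$ reduces parallel trends to $\E\{\pi(pa(A))\,\E(\dot Y_1^0-\dot Y_0^0\mid pa(A))\}=0$, and the ``regularity condition'' you rightly flag as the crux is precisely the requirement that this hold for every selection rule $\pi$ that is a function of $pa(A)$. The paper then picks the single threshold indicator $\1\{\E(\dot Y_1^0-\dot Y_0^0\mid pa(A))\ge 0\}$ (together with the varying-conditional-trends assumption) rather than your continuum of thresholds $c$, but the mechanism forcing the conditional trend to vanish almost surely is the same.
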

In words, Lemma \ref{lemma:ghanemE3} states that parallel trends over values of the treatment imply parallel trends over values of the \textit{parents} of treatment as well. An equivalent formulation of (\ref{eq:ghanemE3}) is $\E\{\dot Y_1^0|pa(A)\}=\E\{\dot Y_0^0|pa(A)\}$; i.e., parallel trends implies a constant expected deviation of $Y_1^0$ and $Y_0^0$ from their respective means, given the parents of $A$.  In other words, (\ref{eq:ghanemE3}) implies a common functional form for the dependence of $Y_t^0$ on $pa(A)$ across $t$, at the mean. Though the result is stated nonparametrically, intuition can be gained by considering the following parametric model:
\begin{align}\label{eq:commonmodel}
    Y_t^0=\alpha_t +pa(A)'\beta_t + \epsilon_t, \quad t=0,1
\end{align}
where $\beta_t$ is a vector of the same length as $pa(A)$. From Lemma \ref{lemma:ghanemE3}, parallel trends implies that the above regression would have $\beta_0=\beta_1$; i.e., the regression would have the same coefficients at times 0 and 1 up to a possibly time-varying intercept.

\section{Results}\label{sec:implications}
In this section, our aim is to show what features of a SWIG may support or contradict parallel trends in applications. Throughout, we will always assume no anticipation and linear faithfulness. Then, our general approach is to develop conditions under which, if we additionally adopt parallel trends, a SWIG can be rejected. If parallel trends implies that a SWIG should be rejected, this implies (by contrapositive) that parallel trends cannot hold under this SWIG. First, subsection \ref{sec:sufficient} develops some preliminary results extending previous literature on the relationship between parallel trends and the graphical notion of confounding. Then, subsections \ref{sec:y0a} and \ref{sec:distinct} develop the first two conditions which are necessarily implied by parallel trends. Subsection \ref{sec:y0y1} develops a third condition which is not implied (or at least, we are not able to prove is implied) by parallel trends in all models, but which is implied in some reasonable semiparametric models. Then, subsection \ref{sec:summary} summarizes interdependencies between the three conditions.
\subsection{Preliminaries: parallel trends and sufficient adjustment sets}\label{sec:sufficient}
First, we will show a formal relationship between parallel trends and the graphical notion of confounding captured by sufficient adjustment sets.  Specifically, if $M$ represents a common (possibly unmeasured) sufficient set for $Y_1$ and $Y_0$, then the common dependence pattern specified in Lemma \ref{lemma:ghanemE3} applies to $M$ regardless of whether variables in $M$ are parents of $A$:
\begin{lemma}[Common dependence on a sufficient set]\label{lemma:common_dep} Let $M$ denote a common sufficient set for $Y_0$ and $Y_1$, such that  $\E(Y_t^0|A,M)=\E(Y_t^0|M)$ $a.s.$, $t=0,1$. Then, under a mild regularity condition (described in the appendix) parallel trends implies
\begin{equation}\label{eq:ahc}
    \E(Y_1^0 -Y_0^0|M)=\E(Y_1^0-Y_0^0) \quad a.s.
\end{equation}
\end{lemma}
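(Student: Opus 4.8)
The plan is to reduce the claim to showing that the conditional contrast $\E(Y_1^0-Y_0^0\mid M)$ is almost surely constant, exploiting only the sufficiency of $M$, the tower property, parallel trends, and a faithfulness-type regularity condition. First I would use the defining property of a common sufficient set directly: since $\E(Y_t^0\mid A,M)=\E(Y_t^0\mid M)$ a.s.\ for $t=0,1$, subtracting gives $\E(Y_1^0-Y_0^0\mid A,M)=\E(Y_1^0-Y_0^0\mid M)$ a.s.; write $h(M)$ for this common quantity. In particular $h(M)$ is a function of $M$ alone and carries no residual dependence on $A$.

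Second, I would bring in parallel trends through iterated expectations: $\E(Y_1^0-Y_0^0\mid A)=\E\{\E(Y_1^0-Y_0^0\mid A,M)\mid A\}=\E\{h(M)\mid A\}$, and Definition \ref{def:pt} forces the left-hand side to equal the unconditional mean $\E(Y_1^0-Y_0^0)$ almost surely. Taking a further expectation shows $\E\{h(M)\}=\E(Y_1^0-Y_0^0)$, so $\E\{h(M)\mid A\}=\E\{h(M)\}$; equivalently, $h(M)$ is mean-independent of $A$, i.e.\ $\mathrm{Cov}\{h(M),A\}=0$ (equivalently $\mathrm{Cov}\{h(M),\E(A\mid M)\}=0$, since $h(M)$ is $M$-measurable). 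Note that this recovers, and refines to the level of $M$, the coarse implication $\mathrm{Cov}(Y_1^0,A)=\mathrm{Cov}(Y_0^0,A)$ from (\ref{eq:ec2}).

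Third — and this is the step I expect to be the main obstacle — I would upgrade ``mean-independent of $A$'' to ``almost surely constant.'' This cannot come from the graph alone: because $A$ is binary, mean-independence is a single scalar moment restriction, and it can hold for a non-constant $h(M)$ only if the confounder--outcome association and the propensity score $\E(A\mid M)$ cancel exactly. Excluding such a coincidence is precisely the role of the mild regularity condition stated in the appendix, which here plays the same part that the regularity conditions play in Ghanem et al.'s Lemma \ref{lemma:ghanemE3}; under it, together with linear faithfulness (Assumption \ref{asn:f}), $\mathrm{Cov}\{h(M),A\}=0$ yields $h(M)=\E\{h(M)\}$ a.s. Combining the three steps, $\E(Y_1^0-Y_0^0\mid M)=h(M)=\E(Y_1^0-Y_0^0)$ a.s., which is (\ref{eq:ahc}). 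I would also remark that this argument does not route through $pa(A)$ or Lemma \ref{lemma:ghanemE3}: although one could do so when $M\supseteq pa(A)$ (in which case the transition from $pa(A)$ to $M$ is immediate), the direct argument above is cleaner and also covers the case in which $M$ differs from $pa(A)$, e.g.\ when $M$ blocks a back-door path through a confounder of the outcomes that is not a parent of $A$.
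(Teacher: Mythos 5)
Your first two steps are sound and in fact arrive exactly where the paper's proof arrives: using sufficiency of $M$ and iterated expectations, parallel trends reduces to the single scalar moment condition $\E\{\pi(M)\,h(M)\}=0$, where $h(M)=\E(\dot Y_1^0-\dot Y_0^0\mid M)$ and $\pi(M)=\E(A\mid M)$, i.e.\ $\mathrm{Cov}\{h(M),A\}=0$. The gap is in your third step, and neither ingredient you invoke closes it. The regularity condition in the appendix (Assumption \ref{asn:varying}, ``varying conditional trends'') only requires that $h(M)$ not be almost surely strictly positive or almost surely strictly negative; any non-constant mean-zero $h(M)$ satisfies it, so it cannot exclude the exact cancellation you correctly identify as the obstacle. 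Linear faithfulness (Assumption \ref{asn:f}) does not apply either: it constrains covariances between \emph{vertices} of the graph, and $h(M)$ is a derived functional, not a vertex; moreover $\mathrm{Cov}\{h(M),A\}=0$ is compatible with $\mathrm{Cov}(A,Y_0^0)$ and $\mathrm{Cov}(A,Y_1^0)$ both being nonzero (they need only be equal), so no faithfulness violation is forced --- and indeed the lemma statement does not assume faithfulness at all. Concretely: take $M$ uniform on $\{1,2,3\}$ with $\pi(M)$ taking values $(0.3,0.5,0.7)$ and $h(M)$ taking values $(1,-2,1)$; then $\E\{h(M)\}=0$ and $\E\{\pi(M)h(M)\}=0$, the regularity condition holds, yet $h(M)$ is not constant. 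So your argument, as written, does not prove the lemma.

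What actually closes the gap in the paper's proof (inherited from Ghanem et al.'s Lemma F.3 machinery underlying Lemma \ref{lemma:ghanemE3}) is a quantification over assignment mechanisms: the moment condition $\E\{\pi(M)h(M)\}=0$ is required to hold for \emph{every} propensity score $\pi(M)$, not just the realized one. Choosing $\pi(M)=\1\{h(M)\ge 0\}$ gives $\E[\max\{h(M),0\}]=0$, hence $h(M)\le 0$ a.s., which combined with $\E\{h(M)\}=0$ forces $h(M)=0$ a.s.; the role of Assumption \ref{asn:varying} is only to guarantee that this adversarially chosen indicator is a non-degenerate treatment probability (so that conditioning on $A$ in Definition \ref{def:pt} remains meaningful). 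To repair your proof you must make that ``for all $\pi(M)$'' step explicit --- it is a genuine strengthening of the literal Definition \ref{def:pt}, not something recoverable from faithfulness or the regularity condition.
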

A proof of Lemma \ref{lemma:common_dep} is provided in Appendix \ref{app:common_dep}. Similarly to (\ref{eq:ghanemE3}), an equivalent representation of (\ref{eq:ahc}) is $\E(\dot Y_1^0 |M)=\E(\dot Y_0^0|M)$; i.e., a common functional form for the dependence of $Y_t^0$ ($t=0,1$) on the common sufficient set $M$, at the mean.
Though Lemma \ref{lemma:common_dep} only applies to a common sufficient set, this turns out not to be restrictive. Specifically, we also prove in Appendix \ref{lemma:common_dep} that there is always a common sufficient set for $Y_0$ and $Y_1$ in any graph without an arrow from $Y_0$ to $A$, and in the next subsection we show why an arrow from $Y_0$ to $A$ is usually at odds with parallel trends.

The proof of Lemma \ref{lemma:common_dep} builds directly on \cite{ghanem2022selection}'s proof of Lemma \ref{lemma:ghanemE3}. Intuitively, since all confounding is delivered through backdoor paths \citep{greenland1999causal}, we may deduce from equations (\ref{eq:ec1}-\ref{eq:ec2}) that parallel trends implies all backdoor paths between $A$ and $Y_0$ deliver equal additive association as the set of backdoor paths between $A$ and $Y_1$. Lemma \ref{lemma:common_dep} formalizes this idea by stating that any common adjustment set must share an equal additive association pattern. 

\begin{remark}[Sufficiency of additive homogeneous confounding]
We will refer to (\ref{eq:ahc}) as \textit{additive homogeneous confounding} as this is an important relationship that many of our results build on. In addition to being necessary, (\ref{eq:ahc}) is also sufficient for parallel trends to hold. To see this, note that:
\begin{align*}
    \E(Y_1^0 - Y_0^0|A)&=\E\{\E(Y_1^0 - Y_0^0|M, A)|A\}\\
    &=\E\{\E(Y_1^0 - Y_0^0|M)|A\}\\
    &=\E\{\E(Y_1^0 - Y_0^0)|A\} \\
    &=\E(Y_1^0 - Y_0^0)
\end{align*}
using the sufficiency of $M$ in the second equality and (\ref{eq:ahc}) in the third.    
\end{remark}

The remainder of our results build on Lemmas \ref{lemma:ghanemE3} and \ref{lemma:common_dep} to describe when a graph can be rejected under Assumption \ref{asn:f}.

\subsection{Condition 1: No arrow from $Y_0$ to $A$}\label{sec:y0a}
Under Assumption \ref{asn:f}, parallel trends is incompatible with a graph containing both (i) an arrow from $Y_0$ to $A$ and (ii) unmeasured confounding (i.e., open paths between $A$ and $Y_1^0$). Specifically, note that in the presence of an arrow from $Y_0$ to $A$, from Lemma \ref{lemma:ghanemE3} it follows that parallel trends implies a conditional mean exchangeability condition, 
\begin{equation}\label{eq:exch}
    \E(Y_1^0|A, Y_0) = \E(Y_1^0|Y_0)
\end{equation}
To see this, note that if $Y_0$ is a parent of $A$, parallel trends implies:
\begin{align}\label{eq:noy0a}
    \E\{\dot Y_1^0 |pa(A)\} &= \E\{\dot Y_0^0|pa(A)\} = \dot Y_0^0
\end{align}
where the first equality is from Lemma \ref{lemma:ghanemE3} and the second equality always holds when $pa(A)$ includes $Y_0$ (the latter is equal to $Y_0^0$ under no anticipation). In other words, the inclusion of $Y_0$ in $pa(A)$ implies that $\E\{ Y_1^0 |pa(A)\}-\E(Y_1^0)$ (and hence $\E\{Y_1^0|pa(A)\}$) does not depend on any parents of $A$ besides $Y_0$. Then, conditional mean exchangeability (\ref{eq:exch}) holds because:
\begin{alignat*}{3}
    \E(\dot Y_1^0|A, Y_0) &= \E[\E\{ \dot Y_1^0|pa(A), A\}|A, Y_0] \\
     &= \E[\E\{ \dot Y_1^0|pa(A)\}|A, Y_0]\\
     &= \E[\E\{ \dot Y_0^0|pa(A)\}|A, Y_0]\\
     &= \dot Y_0^0,
\end{alignat*} 
where the second equality holds by the causal Markov assumption, and the third and fourth equalities apply (\ref{eq:noy0a}). 
Then from (\ref{eq:noy0a}), we have that $\E( \dot Y_1^0|A, Y_0)=\E( \dot Y_1^0| Y_0)$; adding $\E(Y_1^0)$ to both sides proves the result.

However, if there are any open paths between $A$ and $Y_1^0$, then linear faithfulness implies (\ref{eq:exch}) must be false. Additionally, even without adopting linear faithfulness, parallel trends in the presence of an arrow from $Y_0$ to $A$ would imply the ATT was identifiable by adjustment for $Y_0$ alone, so that DID would be unnecessary. 

\subsection{Condition 2: No disjoint sufficient sets for $Y_0$ and $Y_1$}\label{sec:distinct}
Under linear faithfulness, parallel trends implies that any sufficient set for $Y_0$ must typically also be a sufficient set for $Y_1$ and vice versa. The caveat ``typically'' is that the statement is only true for sufficient sets that are each a subset of a common sufficient set for both $Y_1$ and $Y_0$, though this is not really restrictive for reasons previously stated. We state this result as a Lemma to make it more clear.
\begin{lemma}[All sufficient sets in common]\label{lemma:allcommon}
Let $M$ denote a common sufficient adjustment set for $Y_0$ and $Y_1$, such that $\E(Y_t^0|A,M)=\E(Y_t^0|M)$ $a.s.$, $t=0,1$. Suppose $M_0 \subset M$ is a sufficient set for some $Y_s$ $(s\in \{0,1\})$, such that $\E(Y_s^0|A, M_0)=\E(Y_s^0|M_0)$. Then, parallel trends implies $M_0$ is also a common sufficient adjustment set for $Y_0$ and $Y_1$ such that  $\E(Y_t^0|A,M_0)=\E(Y_t^0|M_0)$, $t=0,1$.   
\end{lemma}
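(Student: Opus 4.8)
The plan is to deduce the statement almost entirely from Lemma~\ref{lemma:common_dep}, using the law of iterated expectations to pass between the nested conditioning sets $M_0 \subseteq M$. Because Lemma~\ref{lemma:common_dep} is symmetric in $Y_0^0$ and $Y_1^0$, it suffices to treat the case $s=0$; the hypothesis then already gives that $M_0$ is sufficient for $Y_0$, so the remaining work is to establish $\E(Y_1^0|A,M_0) = \E(Y_1^0|M_0)$ $a.s.$, equivalently (subtracting the constant $\E(Y_1^0)$ from both sides) $\E(\dot Y_1^0|A,M_0) = \E(\dot Y_1^0|M_0)$.

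First I would assemble three ingredients: (a) since $M$ is a common sufficient set, $\E(\dot Y_t^0|A,M) = \E(\dot Y_t^0|M)$ for $t = 0, 1$; (b) applying Lemma~\ref{lemma:common_dep} to $M$ under parallel trends (whose regularity condition I would verify is in force) yields $\E(\dot Y_1^0|M) = \E(\dot Y_0^0|M)$ $a.s.$; and (c) since $M_0 \subseteq M$ as sets of variables, $\sigma(A,M_0) \subseteq \sigma(A,M)$ and $\sigma(M_0) \subseteq \sigma(M)$, so iterated expectations may be taken through either $(A,M)$ or $M$. The core computation chains these together. Conditioning on $(A,M_0)$: iterate through $(A,M)$, apply (a) to drop the inner $A$, replace $\dot Y_1^0$ by $\dot Y_0^0$ using (b), reinstate the inner $A$ via (a) with $t=0$, iterate back out to obtain $\E(\dot Y_0^0|A,M_0)$, and finally use the hypothesis that $M_0$ is sufficient for $Y_0$ to collapse this to $\E(\dot Y_0^0|M_0)$. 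A short parallel computation conditioning only on $M_0$ — iterate through $M$, apply (b), iterate back — gives $\E(\dot Y_0^0|M_0) = \E(\dot Y_1^0|M_0)$. Combining the two displays yields $\E(\dot Y_1^0|A,M_0) = \E(\dot Y_1^0|M_0)$, which, with the hypothesis for $t=0$, is precisely the claim that $M_0$ is a common sufficient set for $Y_0$ and $Y_1$.

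I do not anticipate a serious obstacle; the argument is largely bookkeeping about which $\sigma$-field contains which. The step that most warrants care is the passage up to $M$ and back down inside the iterated expectation: it uses the sufficiency of $M$ for \emph{both} outcomes — not merely the common-dependence identity (b) — to legitimately remove and then reintroduce the conditioning on $A$, and it relies on $M_0 \subseteq M$ (the ``typically'' caveat of the statement) so the outer conditioning sets are genuinely coarser than the inner ones. I would also note that, unlike much of the surrounding discussion, this particular argument does not invoke linear faithfulness (Assumption~\ref{asn:f}); faithfulness enters only when the lemma is later combined with a graphical statement about disjoint minimally sufficient sets.
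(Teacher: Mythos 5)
Your proposal is correct and follows essentially the same route as the paper's proof: apply Lemma~\ref{lemma:common_dep} to the common sufficient set $M$ to get $\E(\dot Y_1^0|M)=\E(\dot Y_0^0|M)$, use the sufficiency of $M$ to move $A$ in and out of the inner conditioning, iterate expectations down to $(A,M_0)$ and to $M_0$, and invoke the hypothesized sufficiency of $M_0$ for $Y_s$ to close the chain. Your added observations — that the argument is symmetric in $s$, that it needs $M_0\subseteq M$ for the tower property, and that linear faithfulness plays no role in the proof itself — all match the paper's treatment.
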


The proof of Lemma \ref{lemma:allcommon} is in Appendix \ref{app:allcommon}. Graphically, Lemma \ref{lemma:allcommon} implies a violation of linear faithfulness occurs whenever the conditional independence $\E(Y_t^0|A,M_0)=\E(Y_t^0|M_0)$ is not implied by the graph for time point $t\neq s$. In other words, if the causal diagram implies that $M_0\subset M$ is a sufficient adjustment set for $Y_0$ ($Y_1$), but not $Y_1$ ($Y_0$), then parallel trends implies a violation of linear faithfulness. For example, in the SWIG in Figure \ref{fig:swig2}, from d-separation rules we have the common sufficient set $M=\{U_1, U_3, U_4\}$, with $M_0=\{U_1, U_3\}$ sufficient for $Y_0$ but not $Y_1$. In other words, the graph implies that if we omit $U_4$ from $M$, we still have a sufficient set for $Y_0$, but we no longer have a sufficient set for $Y_1$. However, from Lemma \ref{lemma:allcommon} we have that parallel trends implies that $M_0$ must be sufficient for $Y_1$ as well; i.e., $U_4$ is not required in $M$ for it to be sufficient for either $Y_0$ and $Y_1$. Thus, due to the existence of $U_4$ in  Figure \ref{fig:swig2}, parallel trends cannot hold under linear faithfulness. By parallel arguments, the existence of $U_3$ in Figure \ref{fig:swig2} also implies parallel trends cannot hold under linear faithfulness. 

An important implication of Lemma \ref{lemma:allcommon} is that, if no common set of variables represents a minimally sufficient set for both outcome times, then parallel trends implies a violation of linear faithfulness; i.e., at least one backdoor path depicted in the graph must transmit zero total association at the mean. This is evident in Figure \ref{fig:swig2} where $\{U_1, U_3\}$ is the only minimally sufficient set for $Y_0$, and $\{U_1, U_4\}$ is the only minimally sufficient set for $Y_1$.

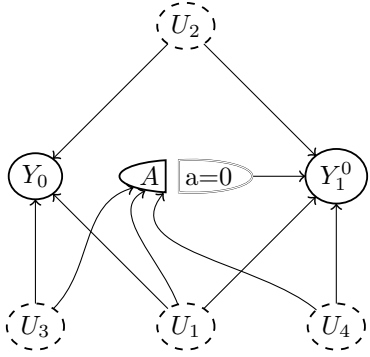
\begin{figure}
    [t!]
    \centering
    \begin{tikzpicture}[node distance=2cm]
\tikzset{line width=0.7pt, outer sep=0pt,
         ell/.style={draw,fill=white, inner sep=2pt,
          line width=0.7pt},
         swig vsplit={gap=5pt,
         inner line width right=0.5pt}}
        \node (y0) [ell, shape=ellipse] {$Y_0$};
        \node (a) [ell, shape=swig vsplit, right of=y0]  {\nodepart{left}{$A$}
        \nodepart{right}{a=0}};
        \node (y1) [ell, shape=ellipse, right of=a] {$Y_1^0$};
        \node (u1) [ell, dashed, shape=ellipse, below of=a] {$U_1$};
        \node (u2) [ell, dashed, shape=ellipse, above of=a] {$U_2$} ;
        \node (u3) [ell, dashed, shape=ellipse, below of=y0] {$U_3$};
        \node (u4) [ell, dashed, shape=ellipse, below of=y1] {$U_4$};
        \draw[->] (a) to (y1);
        \draw[->] (u1) to (y0);
        \draw[->] (u1) to[out=110, in=-155] (a);
        \draw[->] (u3) to[in=-165] (a);  
        \draw[->] (u4) to[out=140, in=-130] (a); 
        \draw[->] (u1) to (y1);
        \draw[->] (u2) to (y0);
        \draw[->] (u2) to (y1);
        \draw[->] (u3) to (y0);
        \draw[->] (u4) to (y1);
        \end{tikzpicture}
    \caption{A directed SWIG in which $Y_0$ and $Y_1$ contain the disjoint sufficient adjustment sets $\{U_1, U_3\}$ and $\{U_1, U_4\}$, respectively, and which are a each subset of the common adjustment set $\{U_1, U_3, U_4\}$.}
    \label{fig:swig2}
\end{figure}
\subsection{Condition 3: No arrow from $Y_0$ to $Y_1^0$}\label{sec:y0y1}
In this subsection, we argue that it may be undesirable to assume parallel trends in the presence of a $Y_0\rightarrow Y_1^0$ arrow, even though it would not result in a logical contradiction. At a high level, our argument is as follows. Suppose the structural model that generated our data is: (a) partially linear (i.e. one in which $Y_0$ enters additively separably into the structural equation for $Y_1^0$); (b) satisfies parallel trends; and (c) is described by a SWIG with a $Y_0\rightarrow Y_1^0$ arrow. Then it follows that parallel trends would no longer hold in a slightly modified structural model in which the $Y_0\rightarrow Y_1$ arrow is removed (by removing the additive $Y_0$ term from the structural equation for $Y_1^0$). The converse is also true. If parallel trends holds under a structural model corresponding to a SWIG with no $Y_0\rightarrow Y_1^0$ arrow, then it no longer holds after inserting the arrow by adding a separable additive $Y_0$ term to the structural equation for $Y_1^0$. 

In other words, if one believes parallel trends holds (under a partially linear model) in the presence of a $Y_0\rightarrow Y_1^0$ arrow, one must believe  that parallel trends holds \textit{because of} the arrow. The arrow must convey the precise magnitude of association required to offset the violation of parallel trends that would be present under a model without the arrow. We cannot imagine a reasonable mechanism by which the effect of $Y_0$ on $Y_1^0$ should be more than loosely connected to the magnitude of the parallel trends violation in an otherwise similar model lacking the $Y_0\rightarrow Y_1^0$ arrow, let alone exactly offsetting it. Therefore, (under the 
partial linearity assumption) parallel trends in the presence of a $Y_0\rightarrow Y_1^0$ amounts to a remarkable coincidence. We conjecture that similar coincidences would be required in nonparametric models. In the absence of the arrow, on the other hand, (and under linear faithfulness and the other graphical conditions) parallel trends reduces to additive homogeneous confounding, which might be justified more naturally on grounds of stability of (albeit scale specific) confounding effects over time.


Now, we will formalize the argument that one cannot be agnostic to the $Y_0\rightarrow Y_1^0$ arrow. Consider two models $G_0$ and $G_1$. Both share the same structure as the NPSEM in Figure \ref{fig:npsem} except that $A=f_A^0(U_1, U_3, U_4, \varepsilon_A)$ in both (so that $Y_0$ does not affect $A$), and $f_{Y_1}$ varies between the two as:
\begin{align*}
    G_0:Y_1^0&=f^0_{Y_1}(U_1, U_2, U_4,\varepsilon_{Y_1})\\
    G_1:Y_1^0&=f^0_{Y_1}(U_1, U_2, U_4, \varepsilon_{Y_1}) + \alpha Y_0 ;
\end{align*}
that is, $Y_0$ does not affect $Y_1$ in $G_0$; and in $G_1$, $Y_0$ affects $Y_1$ linearly via $\alpha$. Parallel trends holding in both models represents a violation of linear faithfulness, which we will now demonstrate. 

\begin{proof}
    Begin by assuming parallel trends holds in both $G_1$ and $G_0$. In both models, it can be shown that there exists the common sufficient set $M=\{U_1, U_3, U_4\}$ (see Appendix). Then, from Lemma \ref{lemma:common_dep}, parallel trends holding in both models implies:
\begin{equation}\label{eq:balance}
    \E(\dot Y_0^0|M)=\E_{G_0}(\dot Y_1^0|M)=\E_{G_1}(\dot Y_1^0|M) \quad a.s.
\end{equation}
where $\E_G$ denotes that the expectation is taken over model $G$ (note that the model for $Y_0^0$ is equivalent across $G_1$ and $G_0$ so that $\E_{G_1}(\dot Y_0^0|M)=\E_{G_0}(\dot Y_0^0|M)=\E(\dot Y_0^0|M)$ by definition). Then, note that $\E_{G_1}(\dot Y_1^0|M)=\E\{\dot f^0_{Y_1}(U_1, U_2, U_4, \varepsilon_{Y_1}) + \alpha \dot Y_0 |M\}=\E_{G_0}(\dot Y_1^0|M) + \alpha \E( \dot Y_0^0 |M)$ $a.s.$, so that parallel trends in both models implies:
\begin{equation}\label{eq:wierd}
    \E(\dot Y_0^0|M)-\E_{G_0}(\dot Y_1^0|M)=\alpha \E( \dot Y_0^0 |M) \quad a.s.
\end{equation}
which, from (\ref{eq:balance}), implies $\alpha \E( \dot Y_0^0 |M)=0$, a violation of linear faithfulness. 
\end{proof}

If parallel trends cannot hold in both $G_0$ and $G_1$, which model should one prefer? We argue that one should prefer $G_0$ for reasons we will now explain. For parallel trends to hold in $G_1$ requires special balancing of associations, which is evident from equation (\ref{eq:wierd}) as well. In a slight abuse of notation, conceptualize $\E_G(\dot Y_t^0|M)$ as the time-$t$ confounding in model $G$. Suppose parallel trends holds in $G_1$ but not $G_0$. The time-1 confounding in $G_1$ can be decomposed as:
\begin{align*}
    \E_{G_1}(\dot Y_1^0|M) &= \underbrace{\E_{G_1}(\dot Y_1^0|M)-\E_{G_0}(\dot Y_1^0|M)}_{\Delta_1}+\underbrace{\E_{G_0}(\dot Y_1^0|M)}_{\Delta_0}.
\end{align*}
Since the only difference between $G_1$ and $G_0$ is the $Y_0\rightarrow Y_1^0$ arrow, $\Delta_1$ captures the time 1 confounding passing through $Y_0$ in $G_1$, and $\Delta_0$ captures the time 1 confounding \textit{not} passing through $Y_0$ in $G_1$. Note that parallel trends not holding in $G_0$ implies that $\Delta_0\neq \E(\dot Y_0^0|M)$; i.e., the time-1 confounding not passing through $Y_0$ differs from the time-0 confounding by some amount. Equation (\ref{eq:wierd}) states that $\alpha$ must precisely multiply with the time 0 confounding to offset this latter difference. To illustrate why this may be seen as a remarkable coincidence, Figure \ref{fig:swig5} provides SWIG for $G_1$ where for visual simplicity $M$ is collapsed to single node. For parallel trends to be violated in $G_0$, associations transmitted by the edges $\pi_0$ and $\pi_1$ differ by some amount. For parallel trends to hold in $G_1$, $\alpha$ must combine with $\pi_0$ to exactly cancel out this difference, suggesting that $\alpha$ would need to contain information about how much the associations transmitted along the edges $\pi_0$ and $\pi_1$ differ. On the other hand, such balancing is not needed for parallel trends to hold in $G_0$, as only associations transmitted along the edges $\pi_0$ and $\pi_1$ need to be equal.

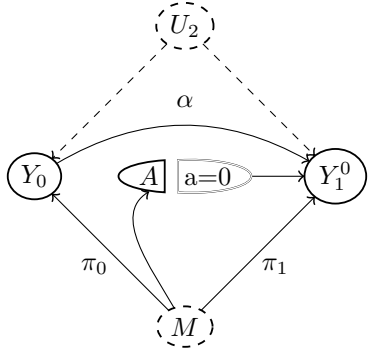
\begin{figure}
[!t]
    \centering
    \begin{tikzpicture}[node distance=2cm]
\tikzset{line width=0.7pt, outer sep=0pt,
         ell/.style={draw,fill=white, inner sep=2pt,
          line width=0.7pt},
         swig vsplit={gap=5pt,
         inner line width right=0.5pt}}
       \node (y0) [ell, shape=ellipse] {$Y_0$};
        \node (a) [ell, shape=swig vsplit, right of=y0]  {\nodepart{left}{$A$}
        \nodepart{right}{a=0}};
        \node (y1) [ell, shape=ellipse, right of=a] {$Y_1^0$};
        \node (u1) [ell, dashed, shape=ellipse, below of=a] {$M$};
        \node (u2) [ell, dashed, shape=ellipse, above of=a] {$U_2$};
        \node (pi1) [below of=y1, yshift=8mm, xshift=-8mm] {$\pi_1$};
        \node (pi0) [below of=y0, yshift=8mm, xshift=8mm] {$\pi_0$};
        \node (alpha) [above of=a, yshift=-1cm] {$\alpha$};
        \draw[->] (a) to (y1);
        \draw[->] (u1) to (y0);
        \draw[->] (u1) to[out=120, in=-150] (a);
        \draw[->] (u1) to (y1);
        \draw[->] (y0) to[bend left=30] (y1);
        \draw[->, dashed] (u2) to (y0);
        \draw[->, dashed] (u2) to (y1);
        \end{tikzpicture}
    \caption{Simplified SWIG representing $G_1$}
    \label{fig:swig5}
\end{figure}

In nonparametric models, the way in which the time-0 confounding must combine with the $Y_0\rightarrow Y_1^0$ arrow to guarantee parallel trends may be more complex. As a result, assuming that parallel trends would hold in models with and without this arrow may not present as a violation of linear faithfulness, but we conjecture that a similar strange cancellation would generally be required. For example, consider an additively separable rather than partially linear model; i.e., we might consider $G_1':Y_1=f^0_{Y_1}(U_1, U_2, U_4, A,\varepsilon_{Y_1}) + h(Y_0)$. Parallel trends holding in both $G_0$ and $G_1'$ implies $\E[h(Y_0)-\E\{h(Y_0)\}|M]=0$ $a.s.$; i.e., $h(Y_0)$ must uncorrelated with $M$ even though d-separation rules do not imply that $Y_0$ and $M$ are uncorrelated. Thus, $h(\cdot)$ must somehow omit all confounding variation contained in $Y_0$. While not strictly a violation of linear faithfulness in this more general case, the requirement that an edge only transmit non-confounding variation seems highly restrictive. 


\subsection{Summary of conditions}\label{sec:summary}
In the previous three subsections, we have argued that under assumptions which are likely reasonable in practice, parallel trends implies the following conditions:
\begin{condition}
$Y_0$ does not directly affect $A$ so long as unmeasured confounding exists between $A$ and $Y_1$
\end{condition}
\begin{condition}
There do not exist disjoint sufficient sets for $Y_0$ and $Y_1$, which are subsets of a common sufficient set
\end{condition} 
\begin{condition}
$Y_0$ does not directly affect $Y_1$
\end{condition}
Specifically, Conditions 1 and 2 are implied by parallel trends under linear faithfulness alone. Though Condition 3 is not in general implied under linear faithfulness in a nonparametric model, in some reasonable semiparametric models, parallel trends cannot hold both with and without a $Y_0\rightarrow Y_1$ arrow (under linear faithfulness). Moreover, the existence of a $Y_0\rightarrow Y_1$ edge in more flexible models implies strange conditions that are similar to, but not formally included under, violations of linear faithfulness. Thus, we might reasonably take Condition 3 to apply more broadly. 

The implications of Condition 2 are somewhat more subtle than, and are interrelated with, the other two conditions. First, if there is an edge $Y_0 \rightarrow A$ then there is no sufficient set for $Y_0$ and so Condition 2 is not applicable. Second, if there is an edge $Y_0\rightarrow Y_1$, then $Y_0$ and $Y_1$ might share minimally sufficient sets that were not otherwise shared. Specifically, with the $Y_0\rightarrow Y_1$ edge, any common causes of $A$ and $Y_0$ are also common causes of $A$ and $Y_1$; these variables will generally appear in a minimally sufficient set for both outcome times, when they may not have for $Y_1$ otherwise. Thus, if one does not accept Condition 3 as being universally applicable, then in a given application we may find violations of Condition 2 which can appear to be solved by assuming that the edge $Y_0\rightarrow Y_1$ exists. The idea that the existence of certain arrows could make parallel trends possible when it would be rejected otherwise appears to us implausible. This point may be seen as providing further evidence that Condition 3 is broadly applicable. 

If one adopts all three conditions, and if one wishes to allow for unmeasured confounding and for $Y_0$ and $Y_1$ to be correlated, the SWIG in Figure \ref{fig:swig4} can be shown to be the maximal graph compatible with parallel trends. To see this, in Figure \ref{fig:swig3}, we have maintained all edges from Figure \ref{fig:npsem} except for the $Y_0\rightarrow A$ and $Y_0\rightarrow Y_1^0$ edges (that is, we adopt Conditions 1 and 2). Then, by d-separation rules, we have that in all directed SWIGs compatible with the partially directed SWIG Figure \ref{fig:swig3},  $Y_1$ has the minimally sufficient sets $M_0=\{U_1, U_3, U_4\}$ and $M_1=\{U_1, U_2, U_4\}$ , whereas $Y_0$ has only one, $M_0=\{U_1, U_3, U_4\}$; we have a violation of Condition 2. (We provide R code in the Appendix to demonstrate this using the \verb|dagitty| package in R; showing this analytically is complex and outside of our scope.) Set $M_1$ occurs because of paths going through $U_2$. Therefore, if we do not wish to remove $U_2$ (i.e., we wish to allow $Y_0$ and $Y_1$ to be correlated), we must delete one or more edges connecting $U_2$ to the other $U$s, and/or the $Y_0\rightarrow U_4$ edge. Any such deletions that lead to the removal of the minimally sufficient set $M_1$ will result in $Y_0$ and $Y_1$ each having a single minimally sufficient set, $\{U_1, U_3\}$ and $\{U_1, U_4\}$, respectively (as in Figure \ref{fig:swig3}). The only way to ensure that $Y_0$ and $Y_1$ share all minimally sufficient sets is to assume that $U_3$ and $U_4$ do not exist. Thus, Conditions 1-3 imply the confounding mechanism in DID must be representable by Figure \ref{fig:swig4}. 
\begin{figure}
    [t!]
    \centering
    \begin{tikzpicture}[node distance=2cm]
\tikzset{line width=0.7pt, outer sep=0pt,
         ell/.style={draw,fill=white, inner sep=2pt,
          line width=0.7pt},
         swig vsplit={gap=5pt,
         inner line width right=0.5pt}}
        \node (y0) [ell, shape=ellipse] {$Y_0$};
        \node (a) [ell, shape=swig vsplit, right of=y0]  {\nodepart{left}{$A$}
        \nodepart{right}{a=0}};
        \node (y1) [ell, shape=ellipse, right of=a] {$Y_1^0$};
        \node (u1) [ell, dashed, shape=ellipse, below of=a] {$U_1$};
        \node (u2) [ell, dashed, shape=ellipse, above of=a] {$U_2$} ;
        \node (u3) [ell, dashed, shape=ellipse, below of=y0] {$U_3$};
        \node (u4) [ell, dashed, shape=ellipse, below of=y1] {$U_4$};
        \draw[->] (a) to (y1);
        \draw[->] (u1) to[out=110, in=-155] (a);
        \draw[->] (u3) to[in=-165] (a);  
        \draw[->] (u4) to[out=140, in=-130] (a);
        \draw[->] (u1) to[bend left=10] (y0);
        \draw[->] (u1) to[bend right=10] (y1);
        \draw[->] (u2) to[bend right =10] (y0);
        \draw[->] (u2) to (y1);
        \draw[->] (u3) to (y0);
        \draw[->] (u4) to (y1);
        \draw[->] (u1) to (u4);
        \draw[->] (y0) to[bend right=10] (u4);
        \draw[dashed, -] (u1) to [bend left=45] (u2);
        \draw[dashed,-] (u1) to (u3);
        \draw[dashed,-] (u1) to (u4);
        \draw[dashed,-] (u2) to[bend right=75] (u3);
        \draw[->] (u2) to[bend left=75] (u4);
        \draw[->] (u3) to [bend right=23] (u4);
        \end{tikzpicture}
    \caption{A partially directed SWIG in which parallel trends would be rejected under linear faithfulness, as $U_4$ appears in a minimally sufficient set for $Y_1$ but not $Y_0$.}
    \label{fig:swig3}
\end{figure}
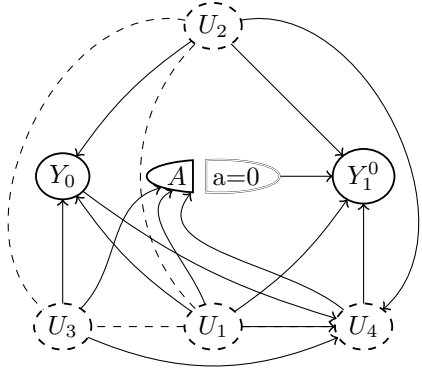
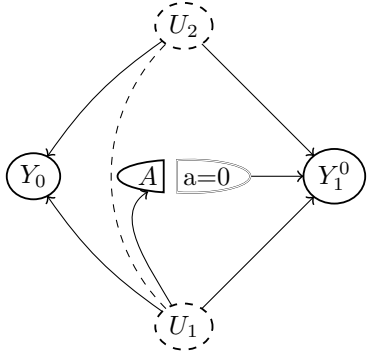
\begin{figure}[t]
     \centering
    \begin{tikzpicture}[node distance=2cm]
\tikzset{line width=0.7pt, outer sep=0pt,
         ell/.style={draw,fill=white, inner sep=2pt,
          line width=0.7pt},
         swig vsplit={gap=5pt,
         inner line width right=0.5pt}}
       \node (y0) [ell, shape=ellipse] {$Y_0$};
        \node (a) [ell, shape=swig vsplit, right of=y0]  {\nodepart{left}{$A$}
        \nodepart{right}{a=0}};
        \node (y1) [ell, shape=ellipse, right of=a] {$Y_1^0$};
        \node (u1) [ell, dashed, shape=ellipse, below of=a] {$U_1$};
        \node (u2) [ell, dashed, shape=ellipse, above of=a] {$U_2$} ;
        \draw[->] (a) to (y1);
        \draw[->] (u1) to[bend left=10] (y0);
        \draw[->] (u1) to[out=120, in=-150] (a);
        \draw[->] (u1) to (y1);
        \draw[->] (u2) to[bend right =10] (y0);
        \draw[->] (u2) to (y1);
        \draw[dashed, -] (u1) to [bend left=45] (u2);
        \end{tikzpicture}
    \caption{Maximal SWIG consistent with conditions 1, 2, and 3; when allowing for (i) correlation between $Y_0$ and $Y_1$, and (ii) unmeasured confounding of the relation between $A$ and $Y_1$.}
    \label{fig:swig4}
\end{figure}

Thus, in most applications, the maximal graph compatible with parallel trends would be the graph shown in Figure \ref{fig:swig4}, in which a common set of variables $U_1$ confounds both $Y_0$ and $Y_1$, and there are no effects of $Y_0$ on either the treatment or the post-treatment outcome. Importantly, $Y_0$ and $Y_1$ are allowed to be correlated through common unmeasured causes $U_2$. Notably, model (\ref{eq:twfe}) could be represented by the graph in Figure \ref{fig:swig4} if we substitute $\alpha_i=U_{i1}$ and $\varepsilon_{it}=U_{i2}+\varepsilon_{it}^*$, where $\varepsilon_{it}^*$ is an independent error. However, Figure \ref{fig:swig4} does not imply the parametric restrictions of model (\ref{eq:twfe}). 

Note that we have mainly discussed necessary implications of (not sufficient conditions for) parallel trends. Parallel trends is indeed not implied by the SWIG in Figure \ref{fig:swig4} or by any NPSEM due to the scale-dependency of the assumption. However, Remark 1 indicates that additive homogeneous confounding (\ref{eq:ahc}) is both necessary and sufficient.  Since this is a scale-dependent restriction, it will often be difficult to justify in practice. However, if, as we have argued is usually necessary, our SWIG corresponds to Figure \ref{fig:swig4}, then (\ref{eq:ahc}) can be interpreted loosely as stating that the arrows $U_1\rightarrow Y_0$ and $U_1\rightarrow Y_1$ deliver equal additive association, which may be reasonable to argue in some applications. Note that similar assumptions about constancy of associations across time and place are often made in other settings; for example, in quantitative bias analyses of unmeasured confounding based on an external validation study \citep{lash2014good}, or when transporting study results to new settings \citep{dahabreh2019extending}.

\section{Plausibility of Parallel Trends for Medicaid Expansion}
Here, we apply our approach to attempt to falsify parallel trends for the motivating example. We begin by considering whether (i) a $Y_0 \rightarrow A$ exists. Though states may have been motivated by uninsurance rates, we expect that expansion decisions were almost exclusively driven by their political lean in the polarized political climate surrounding the ACA. Next, regarding (ii) the potential existence of non-overlapping minimally sufficient sets, we consider whether there may be disjoint common causes of $A$ and $Y_0$ on the one hand, and $A$ and $Y_1$ on the other. Again, we expect the main driving force in determining $A$ to be the state political lean, which could affect insurance rates in times 0 and 1 via other policies (such as labor policy), thus being a component of $U_1$ in Figure \ref{fig:npsem}. Another component of $U_1$ might be the geographic location of the state, which, through historical processes, could both cause political lean and influence local economic conditions, which could in turn impact uninsurance rates.  We are unaware of variables such as $U_3$ and $U_4$ that could have an effect only in one year. Such variables could represent temporary economic shocks. For example, the importance of farming in a state's economy ($U_3$) might influence the expansion decision through broad impacts on the state's economy and culture. Then, supposing a temporary shock (such as a drought) occurred in 2013, $U_3$ might affect insurance rates only in 2013 via a large number of farmers falling below 138\% of the federal poverty level, but not in 2014 when the economy may have recovered. We are unaware of such a differential temporal shock in this example and only present it as a possibility. Finally, regarding $Y_0 \rightarrow Y_1$, though we expect insurance rates to be correlated year-on-year, an intervention that could shift insurance rates in one year but not the next (for example, an emergency Medicaid expansion as occurred during the COVID-19 pandemic) would likely have limited impact on insurance rates in the following year. Thus we could reasonably argue that the SWIG in Figure \ref{fig:swig4} describes the DGM, which would not lead us to reject parallel trends under linear faithfulness alone. 

However, we would still need to justify additive homogeneous confounding. Would pre-expansion political climate ($U_1$)  have constant additive association with insurance rates in the absence of intervention? Due to the scale-dependence of this assumption and the fact that insurance rates change year-to-year, we are unaware of a way to justify it a priori. 

\section{Discussion}
Since parallel trends is a scale-dependent assumption, it does not imply any restrictions on a causal DAG without additional assumptions. However, under a linear faithfulness assumption stating that d-connected variables must be correlated, we showed that certain features of a causal DAG can be in contradiction to parallel trends. In particular, a causal DAG is in conflict with parallel trends under linear faithfulness if either (i) pre-treatment outcomes affect treatment assignment, or (ii) the unmeasured confounding variables do not form a common minimally sufficient set to identify the effect of treatment on pre- and post-treatment outcomes. Additionally, (iii) an arrow from pre- to post-treatment outcomes raises concerns for parallel trends as it can be shown to violate linear faithfulness in some relatively simple models. Thus, one can decide whether a given application should be under consideration for DID based on a causal diagram. If a causal diagram based on substantive knowledge includes any of these three features, parallel trends is unlikely to hold in this application and other methods might be better suited. Supposing one does not reject parallel trends based on a causal diagram, one still must justify additive homogeneous confounding. This is a scale dependent assumption (i.e., homogeneity on the additive scale usually implies heterogeneity on the multiplicative scale, etc.) and thus is usually hard to justify. In some settings, it may be possible to find support for this assumption through external empirical studies on the relation between the posited unmeasured confounders and the outcomes over time. 

The standard approach to justifying parallel trends in applied literature is to compare empirical trends in the treated and untreated over several periods prior to the treatment \citep{cunningham2021causal, angrist2009mostly}. Our approach to justifying parallel trends can complement this purely empirical approach by incorporating a priori subject matter knowledge in a DAG. The relationship between pre-trends and structural relationships representable in a DAG will be a subject of future work.

Broadly, our results generalize and connect principles that have been noted in the literature, mainly in the context of linear structural models \citep{ kim2021gain,zhang2021exploiting,weber2015assumption,zeldow2021confounding,wooldridge2010econometric}. In particular, Condition 1 has been noted repeatedly as important to ensuring parallel trends \citep{weber2015assumption, kim2021gain}, and has been considered as part of the ``strict exogeneity'' assumption adopted in linear fixed effects models \citep{wooldridge2010econometric}. While Condition 2 has not been fully stated to our knowledge, \cite{kim2021gain} noted in a linear setting the potential for violations of parallel trends based on common causes of pre-treatment outcomes and treatment that are not causes of post-treatment outcomes. One may also consider violations of Condition 2 to be a type of time-varying unmeasured confounding, which is usually thought to be incompatible with parallel trends \citep{zeldow2021confounding}. Regarding Condition 3, \cite{kim2021gain} noted (in a linear setting) that for parallel trends to hold in the presence of an arrow from pre- to post-treatment outcomes, it would have to be the case that the product of multiple path coefficients equal a single path coefficient, which is similar to our finding that, under linear faithfulness and certain semiparametric restrictions on the model, parallel trends cannot hold both with and without such an arrow. Similarly, \cite{weber2015assumption} depicted the increment $Y^{\theta}=Y_1-Y_0$ on a graph and showed that in their example, and arrow  $Y_0\rightarrow Y^\theta$ violated parallel trends. This finding also supports the notion that an arrow $Y_0\rightarrow Y_1$ should be expected to violate parallel trends, since most data-generating mechanisms consistent with this arrow would also have an arrow $Y_0\rightarrow Y^\theta$.

Our analysis has important limitations. One is that many of our conditions depend on whether there exists a causal effect of variables for which a well-defined intervention may not exist . In particular, it may be difficult to reason about causal effects of prior outcomes $Y_0$; when these variables are complex processes like unemployment or uninsurance rates, many possible interventions exist. Additionally, we did not consider the role of measured covariates or of DID designs involving variation in treatment timing; these will be the subject of future work.

\bibliography{main}

\begin{appendices}

\section{Proof of Lemma \ref{lemma:common_dep}}\label{app:common_dep}
\begin{proof}
   The proof follows directly from Ghanem et al. (2024, Lemma F.3). Let $\pi (M)=\E[A|M]$ denote the true propensity score. From \cite{rosenbaum1983central}, a property of the propensity score is that $A \ind \{Y_1^0, Y_0^0\} | \pi(M)$. Since it is always possible to write $A=\pi(M)+\epsilon_A$, we additionally have  $\{\pi(M) + \epsilon_A \}\ind \{Y_1^0, Y_0^0\} | \pi(M)$, further implying $\epsilon_A \ind (Y_1^0, Y_0^0) | \pi(M)$ . Note, from Ghanem et al. ( 2024, Lemma F.3), parallel trends implies $\E\{A(\dot Y_1^0 - \dot Y_0^0\}=0$. Then we have
\begin{align*}
0 &= \E\{A(\dot Y_1^0 - \dot Y_0^0)\}\\
&= \E[\{\pi(M) + \epsilon_A\}(\dot Y_1^0 - \dot Y_0^0)]\\
&= \E\{\pi(M)(\dot Y_1^0 - \dot Y_0^0)\}\\
&= \E\{\pi(M)\E(\dot Y_1^0 - \dot Y_0^0|M)\}
\end{align*}
where the third equality follows by the property of the propensity score that $\epsilon_A \ind (Y_1^0, Y_0^0) | \pi(M)$. Assume the following mild regularity condition:
\begin{assumption}[Varying conditional trends]\label{asn:varying} At least one of the following is true:
\begin{enumerate}
    \item $\Pr\{\E(Y_1^0-Y_0^0|M)>0\}<1$, or
    \item $\Pr\{\E(Y_1^0-Y_0^0|M)<0\}<1$
\end{enumerate}
\end{assumption}
i.e., the potential outcome trends, conditional on $M$, are either not all positive or not all negative. We first prove the case where Assumption \ref{asn:varying}(1) holds. If parallel trends holds for all $\pi(M)$, it must hold for $\pi(M)=\1\{\E(\dot Y_1^0 - Y_0^0|M) \geq 0\}$. For the latter choice of $\pi(M)$ we have:
\begin{align*}
    0 &= \E\{\pi(M)\E(\dot Y_1^0 - \dot Y_0^0|M)\} \\
    &= \E[\1\{\E(\dot Y_1^0 - Y_0^0|M) \geq 0\}\E(\dot Y_1^0 - \dot Y_0^0|M)]\\
    &=\E(\dot Y_1^0 - \dot Y_0^0|M)
\end{align*}
where the last equality follows from Lemma F.1 in \cite{ghanem2022selection}.

Next we prove the case were Assumption \ref{asn:varying}(2) holds. If parallel trends holds for all $\pi(M)$, it must hold for $\pi(M)=\1\{\E(\dot Y_1^0 - Y_0^0|M) \leq 0\}$. For the latter choice of $\pi(M)$ we have:
\begin{align*}
    0 &= \E\{\pi(M)\E(\dot Y_1^0 - \dot Y_0^0|M)\} \\
    &= \E[\1\{\E(\dot Y_1^0 - Y_0^0|M) \leq 0\}\E(\dot Y_1^0 - \dot Y_0^0|M)]\\
    &=\E(\dot Y_1^0 - \dot Y_0^0|M)
\end{align*}
where the last equality follows from Lemma F.1 in \cite{ghanem2022selection}.
\end{proof}

\section{Proof of existence of a common sufficient set}
Several of our results refer to the existence of a common sufficient set for the effect of $A$ on $\{Y_1^0, Y_0^0\}$. Here, we prove that this is not a restriction, as the existence of a common sufficient set for one treatment $A$ and two outcomes $Y_0$ and $Y_1$ is guaranteed by d-separation rules. Intuitively, this is because (i) a causal DAG requires all common causes of two variables be drawn, so that a sufficient set always exists for each outcome considered separately; (ii) we can always consider the union of the two sets; and (iii) since sufficient sets need never include descendants of the treatment \citep{shpitser2012validity}, any variable in one set that harms identification for the other must be a collider (or descendant) on a path containing a common cause of the collider and treatment; such a path can therefore be blocked by a variable not harming identification for either outcome.
\begin{proposition}[Existence of a common minimally sufficient set]\label{prop:existence}
Let $A$ be a treatment and $Y_0$ and $Y_1$ two outcomes that can be represented on a causal DAG. Then there must exist $M$ such that $A \ind Y_1^0|M$ and $A \ind Y_0^0|M$.
\end{proposition}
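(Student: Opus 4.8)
The plan is to avoid any delicate patching together of two separately chosen adjustment sets and instead exhibit one explicit set that works for both outcomes: take $M := pa(A)$, the parents of $A$ in the given causal DAG, and verify it is a common sufficient set using the SWIG characterization of sufficiency quoted in Section~3 (``$M$ is sufficient iff $Y_t^0 \dsep A \mid M$ in the SWIG under $a=0$''). In the SWIG formed by splitting $A$ into the random node $A$ and the intervention value $a=0$, the node $A$ has no outgoing edges, since every edge emanating from $A$ has been moved to emanate from $a$. Hence any path in the SWIG with one endpoint at $A$ must begin with an edge oriented into $A$, i.e., it has the form $A \leftarrow P \to \cdots$ (a fork at $P$) or $A \leftarrow P \leftarrow \cdots$ (a chain at $P$) with $P \in pa(A)$, and in either case $P$ is a non-collider on that path. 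Conditioning on $pa(A)$ therefore blocks every such path at its first interior vertex, so $A$ is $d$-separated from every other node of the SWIG given $pa(A)$ — in particular from $Y_0^0$ and from $Y_1^0$; in the degenerate case of an edge $Y_0 \to A$ the conclusion $A \ind Y_0^0 \mid pa(A)$ is immediate because then $Y_0^0 = Y_0 \in pa(A)$. By the causal Markov assumption this yields $A \ind Y_0^0 \mid M$ and $A \ind Y_1^0 \mid M$, which is exactly the claim. (The argument also covers $pa(A)=\emptyset$, i.e.\ an exogenous $A$, for which $M=\emptyset$ works.)

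To also justify the word ``minimally'' in the proposition, I would append a short pruning step: since $M = pa(A)$ is finite, repeatedly delete any single variable whose removal leaves a set still sufficient for both $Y_0^0$ and $Y_1^0$; the procedure terminates at some $M' \subseteq pa(A)$ that is sufficient for both outcomes and from which no variable can be dropped without destroying sufficiency for at least one of them, i.e.\ a common minimally sufficient set. I would state explicitly that ``minimal'' here means minimal among sets jointly sufficient for $Y_0$ and $Y_1$, which is weaker than being minimally sufficient for either outcome in isolation — this is the sense relevant to Lemma~\ref{lemma:allcommon} and the surrounding discussion.

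I do not anticipate a real obstacle along this route, because the $pa(A)$ argument is simply the back-door criterion read off the SWIG. The genuinely fiddly reasoning is the alternative construction hinted at in the main text, where one begins with arbitrary sufficient sets $M_0$ for $Y_0$ and $M_1$ for $Y_1$ and tries to use $M_0 \cup M_1$: conditioning on the extra variables in $M_0$ can open a collider and thereby activate a back-door path from $A$ to $Y_1$ (or vice versa), so one must argue that every activated path can be re-blocked by adding further variables that are non-descendants of $A$, and that this augmentation terminates. That bookkeeping — enumerating which collider-activated paths can arise and showing they can always be closed without harming identification for either outcome — is where the effort would go; taking $M = pa(A)$ sidesteps it entirely, so I would present the $pa(A)$ proof as the main argument and, at most, relegate the union construction to a remark.
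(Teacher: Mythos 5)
Your proof is correct, but it takes a genuinely different route from the paper's. The paper starts from per-outcome minimally sufficient sets $M_0$ and $M_1$, forms the union $M_0\cup M_1$, and then argues that any failure of the union must be due to a collider (or descendant of one) that can be traded for a non-collider on the offending back-door path --- exactly the bookkeeping you identify as fiddly, and which the paper handles somewhat informally (``there must be a sufficient set containing no colliders\dots'' is asserted rather than constructed). Your choice $M=pa(A)$ sidesteps all of this: in the SWIG the split node $A$ retains only its incoming edges, so every path out of $A$ begins $A\leftarrow P$ with $P\in pa(A)$ a non-collider, and conditioning on $pa(A)$ blocks each such path at its first interior vertex regardless of which colliders are opened further along. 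This is just the local Markov property of the SWIG ($A$ has no descendants after splitting, so it is independent of everything else given its parents), and it yields an explicit, always-available common sufficient set with an airtight two-line argument. What the paper's construction buys in exchange is that its common set is assembled from the per-outcome minimally sufficient sets, which ties more directly into the later discussion of Condition~2 (comparing minimally sufficient sets for $Y_0$ and $Y_1$), and it never places $Y_0$ itself in the adjustment set.

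Two small caveats. First, when $Y_0\in pa(A)$ your claim $A\ind Y_0^0\mid pa(A)$ holds only in the degenerate sense that $Y_0^0=Y_0$ is a function of the conditioning set; the $d$-separation test quoted in Section~3 does not formally apply when the outcome lies in $M$, and the paper elsewhere states that no sufficient set for $Y_0$ exists in the presence of a $Y_0\rightarrow A$ edge (its own appendix proof also implicitly excludes this case). It would be cleaner to either exclude that case, as the paper does, or explicitly flag that sufficiency there is vacuous rather than substantive. Second, your pruning step for minimality is fine but unnecessary for the statement as written, whose body asserts only the existence of a common sufficient set; your remark that ``minimal among jointly sufficient sets'' is weaker than ``minimally sufficient for each outcome separately'' is accurate and worth keeping.
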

\begin{proof}
First, note the following \citep{shpitser2012validity}:
\renewcommand{\labelenumi}{(\alph{enumi})}
\begin{enumerate}
    \item If a covariate set $Z$ is sufficient for outcome $Y_t$, then in order for a covariate set $Z'=\{Z, U\}$ to not be sufficient, it must be the case that $U$ contains a collider, a descendant of collider, or a descendant of the treatment
    \item Any minimally sufficient set must not contain any descendants of the treatment
\end{enumerate}

Let $M_0$ be a minimally sufficient set for $Y_0^0$ and let $M_1$ be a minimally sufficient set for $Y_1^0$. Then, suppose that $M=M_0 \cup M_1$ is not a common sufficient set. This implies that one or both of the following is true:
\renewcommand{\labelenumi}{(\roman{enumi})}

\begin{enumerate}
    \item $A \not\ind Y_1^0|M$, or 
    \item $A \not\ind Y_0^0|M$ 
\end{enumerate}
Consider case (i). Since $M_0$ is a minimally sufficient set, from (b), we have that no variable in $M_0$ is a descendant of $A$. From (a), there must exist a variable $H_0\in M_0$ that is a collider or a descendant of a collider on a path between $A$ and $Y_1^0$ that is not otherwise blocked conditional on $M$. Therefore $H_0$ must be connected to $A$ via one or more paths of the form $A\leftarrow \cdots \rightarrow H_0$. Each such path must contain a non-collider between $A$ and $H_0$ such that each path could be blocked by replacing $H_0$ with a non-collider between $A$ and $H_0$. Therefore, there must be a sufficient set for $Y_0$ containing no colliders (or descendants of colliders) on any paths between $Y_1^0$ and $A$. 

Then consider case (ii). From (a) and (b), we have that there must be a variable $H_1\in M_1$ that is a collider or a descendant of a collider on a path between $A$ and $Y_0^0$. Since there are no descendants of $A$ in $M_1$, all paths between $A$ and $H_1$ must contain a non-collider. Therefore, there must be a sufficient set for $Y_1$ containing no colliders or descendants of colliders on any paths between $Y_0^0$ and $A$. 

Let $M_0'$ denote a sufficient set for $Y_0$ not containing any colliders or descendants of colliders between $Y_1^0$ and $A$, and let $M_1'$ denote a sufficient set for $Y_1$ not containing an colliders or descendants of colliders between $Y_0^0$. Then, $M'=M_1'\cup  M_0'$ must be a common sufficient set.  
\end{proof}
\section{Proof of Lemma \ref{lemma:allcommon}}\label{app:allcommon}
\begin{proof}
 Note first that additive homogeneous confounding (\ref{eq:ahc}) implies 
 \begin{equation}\label{eq:subset}
     E(Y_1^0-Y_0^0|M_0)=\E(Y_1^0-Y_0)
 \end{equation}
 for any subset $M_0\subset M$. Also note that (\ref{eq:ahc}) implies 
 \begin{equation}
    \label{eq:ext_ahc}
    \E(Y_1^0-Y_0^0|A,M)=\E(Y_1^0-Y_0^0),
 \end{equation}
 because the sufficiency of $M$ allows us to freely add $A$ to the conditioning event. Then, without loss of generality, let $s=0$ and $s'=1$, and note that 
\begin{align*}
    \E(\dot Y_1^0|A, M_0) &=\E\{\E(\dot Y_1^0|A,M)|A, M_0\} \\
    &=\E\{\E(\dot Y_0^0|A,M)|A, M_0\}\\
    &=\E(\dot Y_0^0|A, M_0)\\
    &=\E( \dot Y_0^0| M_0)\\
    &=\E(\dot Y_1^0| M_0),
\end{align*}
where the second equality applies (\ref{eq:ext_ahc}), the fourth uses the sufficiency of $M_0$ for $Y_0$ (our premise), and the fifth applies (\ref{eq:subset}).
\end{proof}

\section{R code to demonstrate minimally sufficient sets in Figure \ref{fig:swig3} }
\begin{verbatim}
library(dagitty)
figure4_swig = dagitty('pdag {
bb="0,0,1,1"
A [exposure,pos="0.275,0.3"]
"|a=0" [pos="0.29, 0.3"]
U1 [pos="0.25,0.5"]
U2 [pos="0.25,0.2"]
U3 [pos="0.2,0.4"]
U4 [pos="0.35,0.4"]
Y0 [pos="0.2,0.3"]
"Y1^0" [pos="0.35,0.3"]
"|a=0" -> "Y1^0"
U1 -> A; U1 -> U4; U1 -> Y0; U1 -> "Y1^0"
U2 -> Y0; U2 -> "Y1^0"; U2 -> U4
U3 -> A; U3 -> Y0; U3 -> U4
U4 -> A; U4 -> "Y1^0"
Y0 -> U4
U1 -- U2; U1 -- U3; U2 -- U3
}
')
plot(figure4_swig)
adjustmentSets(figure4_swig, outcome='Y1^0', type = 'minimal')
adjustmentSets(figure4_swig, outcome='Y0',   type = 'minimal')
\end{verbatim}
\end{appendices}

\end{document}